\def\Z{\mathbb{Z}}
\def\qed{\hfill\hbox{\rule[-2pt]{3pt}{6pt}}}
\newtheorem{thm}{Theorem}[section]
\newtheorem{prop}[thm]{Proposition}
\newtheorem{dfn}[thm]{Definition}
\newtheorem{proof}{Proof}
\title{Laurent phenomenon algebras and the discrete BKP equation}
\author{Naoto Okubo\\Graduate School of Mathematical Sciences, the University of Tokyo, \\3-8-1 Komaba, Tokyo 153-8914, Japan}
\date{}
\begin{document}
\maketitle

\begin{abstract}
We construct the Laurent phenomenon algebras the cluster variables of which satisfy the discrete BKP equation and other difference equations obtained by its reduction.
These Laurent phenomenon algebras are constructed from seeds with a generalization of mutation-period property.
We show that a reduction of a seed corresponds to a reduction of a difference equation.
\end{abstract}

\section{Introduction}

In this article, we deal with the Laurent phenomenon algebras introduced by Lam and Pylyavskyy \cite{LP}.
The Laurent phenomenon algebra is a generalization of the cluster algebra \cite{cluster} with Laurent phenomenon of cluster variables.
A Laurent phenomenon algebra is a commutative ring described by cluster variables.
A generating set of the Laurent phenomenon algebra is defined by mutation, which is a transformation of a seed consisting of a set of cluster variables and exchange polynomials.
The Laurent phenomenon is the property that all cluster variables obtained by mutations of an initial seed are Laurent polynomials in cluster variables of the initial seed.
It is known that cluster variables of suitable cluster algebras can satisfy the Hirota–Miwa equation \cite{BKP}, the discrete KdV equation \cite{dKdV}, the discrete Toda equation \cite{dToda}, Soms-4, and Somos-5 \cite{Somos}, when the initial seed includes suitable periodic quivers \cite{quiver1}.
Difference equations obtained by a cluster algebra are restricted to the form:
\begin{equation}\label{eq:cluster}
x'_kx_k=\prod_{i\in I}x_i^{a_i}+\prod_{j\in J}x_j^{b_j}.
\end{equation}
However, the discrete BKP equation
\begin{equation}
x_{n+1}^{m+1,l+1}x_{n}^{m,l}
=x_{n}^{m+1,l+1}x_{n+1}^{m,l}
+x_{n+1}^{m+1,l}x_{n}^{m,l+1}
+x_{n+1}^{m,l+1}x_{n}^{m+1,l}
\end{equation}
does not have the form given by \eqref{eq:cluster}.
Cluster variables of a Laurent phenomenon algebra obtained from a mutation of an initial seed can satisfy the difference equations of more general form as
\begin{equation}
x'_kx_k=P\in\mathbb{Z}[\left.x_i\right|i\in I].
\end{equation}
It is known that cluster variables of a Laurent phenomenon algebra can satisfy the Somos-6 \cite{Somos} and some related difference equations, when the initial seed includes suitable exchange polynomials \cite{LP,period}.
These seeds have the property called ‘mutation-period’.
This property is an analogue of the mutation-periodic quiver of cluster algebras.
Several results concerning period-1 seeds have been reported in \cite{period}.
In this paper, we construct the Laurent phenomenon algebras whose cluster variables satisfy the discrete BKP equation \cite{BKP}, the Somos-7, and several other difference equations.
The seed which gives the discrete BKP equation has infinite rank.
This seed has the property of a generalization of the mutation-period.
The seeds which give the Somos-6, Somos-7, and related 2-dimensional difference equations are obtained from reductions of the seed that gives the discrete BKP equation.
This reduction corresponds to a reduction from the discrete BKP equation to these difference equations.

\section{Laurent phenomenon algebras}

In this section, we briefly explain the notion of Laurent phenomenon algebra which we use in the following sections.

\subsection{Definition of Laurent phenomenon algebra}

Let $\bm{x}=(x_0,x_1,\dots ,x_{N-1})$ be an $N$-tuple of variables.
Let $\bm{F}=(F_0,F_1,\dots ,F_{N-1})$ be an $N$-tuple of polynomials in $\mathbb{Z}[x_0,x_1,\dots,x_{N-1}]$.
We assume that these polynomials satisfy the following conditions:
\begin{itemize}
\item (LP1) $F_i\in\mathbb{Z}[x_0,x_1,\dots,x_{N-1}]$ is irreducible and is not divisible by any $x_j$.

\item (LP2) $F_i$ does not depend on $x_i$.
\end{itemize}
Each $x_i$ is called a cluster variable and each $F_i$ is called an exchange polynomial.
The pair $t=(\bm{x},\bm{F})$ is called a seed.
We will often express a seed as
\begin{equation}
t=\{(x_0,F_0),(x_1,F_1),\dots,(x_{N-1},F_{N-1})\}.
\end{equation}
The number of cluster variables $N$ in a seed $t$ is called the rank of the seed $t$ or the rank of the Laurent phenomenon algebra.
For Laurent polynomials $F,G\in\mathbb{Z}[x_0^{\pm},x_1^{\pm},\dots,x_{N-1}^{\pm}]$ and a cluster variable $x_i$, let $\left.F\right|_{x_i\gets G}$ be the polynomial in which we substituted $x_i$ for $G$ in $F$.
A mutation is a particular transformation of seeds.

\begin{dfn}{\rm\cite{LP}}
The mutation of the seed $t$ at $x_k$ is a transformation from $t$ to $t'$, 
\begin{equation}
\mu_k: \; t=\{(x_0,F_0),(x_1,F_1),\dots,(x_{N-1},F_{N-1})\}\,\mapsto \, t'=\{(x'_0,F'_0),(x'_1,F'_1),\dots,(x'_{N-1},F'_{N-1})\},
\end{equation}
that are defined via the following sequence of steps:
\begin{enumerate}
\item Let $\hat{F}_k\in\mathbb{Z}\left[x_0^\pm,x_1^\pm,\dots,x_{k-1}^\pm,x_{k+1}^\pm,\dots,x_{N-1}^\pm\right]$ be the unique Laurent polynomial satisfying following conditions:
\begin{itemize}
\item There exist $a_0,a_1,\dots,a_{N-1}\in\mathbb{Z}_{\leq0}$ such that
\begin{equation}
\hat{F}_k=x_0^{a_0}x_1^{a_1}\dots x_{k-1}^{a_{k-1}}x_{k+1}^{a_{k+1}}\dots x_{N-1}^{a_{N-1}}F_k.
\end{equation}

\item For any $j\neq k$,
\begin{equation}
\left.\hat{F}_k\right|_{x_j\gets F_j/x_j}\in\mathbb{Z}\left[x_0^\pm,x_1^\pm,\dots,x_{j-1}^\pm,x_j^\pm,x_{j+1}^\pm,\dots,x_{N-1}^\pm\right]
\end{equation}
and is not divisible by $F_j$.
\end{itemize}

\item New cluster variable $x'_i$ is defined as:
\begin{equation}
\begin{aligned}
x'_i&=x_i\quad(i\neq k),\\
x'_k&=\hat{F}_k/x_k.
\end{aligned}
\end{equation}

\item If $F_i$ does not depend on $x_k$, then we define $F'_i=F_i$.
In the following steps, we assume that $F_i$ depends on $x_k$.

\item We define the Laurent polynomial $G_i$ by
\begin{equation}
G_i=F_i\left|_{x_k\gets\frac{\left.\hat{F}_k\right|_{x_i\gets0}}{x'_k}}\right..
\end{equation}
Note that we can show that if $F_i$ depends on $x_k$, then $a_i=0$.
Therefore, $\left.\hat{F}_k\right|_{x_i\gets0}$ is well defined.

\item We define $H_i$ to be the result of removing all common factors in 

$\mathbb{Z}[x_0,x_1,\dots,x_{k-1},x_{k+1},\dots,x_{i-1},x_{i+1},\dots,x_{N-1}]$ with $\left.\hat{F}_k\right|_{x_i\gets0}$ from $G_i$.

\item Let $M_i$ be the unique Laurent monomial of $x'_0,x'_1,\dots,x'_{N-1}$ satisfying following conditions:
\begin{itemize}
\item $M_iH_i\in\mathbb{Z}[x'_0,x'_1,\dots,x'_{N-1}]$.

\item $M_iH_i$ is not divisible by any $x'_j$.

\item $M_iH_i$ does not depend on $x'_i$.
\end{itemize}
New exchange polynomial $F'_i$ is defined as $F'_i=M_iH_i$.
\end{enumerate}
\end{dfn}

\begin{prop}{\rm \cite{LP}}
For any $k\in\{0,1,\dots,N-1\}$ and any seed $t$, it holds that $\mu_k^2(t)=t$.
\end{prop}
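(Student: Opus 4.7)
The plan is to verify component by component that applying $\mu_k$ twice recovers every piece of the original seed $t$. I would split the check into three cases: (i) the cluster variables, (ii) the exchange polynomial $F_k$ and any $F_i$ that already fails to depend on $x_k$, for which nothing happens at the first mutation, and (iii) the exchange polynomials $F_i$ with $i\neq k$ that genuinely depend on $x_k$, which is the only substantive case.

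For (i), the equalities $x_i'=x_i$ for $i\neq k$ are tautological, so only $x_k$ requires attention. By (LP2), $F_k$ is independent of $x_k$, hence by step 3 of the definition $F_k'=F_k$; combined with $x_j'=x_j$ for $j\neq k$, the data used to produce $\hat F_k'$ from the mutated seed is identical to that used to produce $\hat F_k$, and the uniqueness clause in step 1 gives $\hat F_k'=\hat F_k$. Consequently $x_k''=\hat F_k'/x_k'=\hat F_k\cdot x_k/\hat F_k=x_k$. Case (ii) is analogous: in every instance, the data that feeds into step 3 (or into the rerun of steps 4--6) is untouched by $\mu_k$.

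For (iii) I would walk through steps 4--6 and track information at each stage. The first mutation substitutes $x_k\gets\hat F_k|_{x_i\gets 0}/x_k'$. The second mutation, acting on the mutated seed, performs the reverse substitution $x_k'\gets\hat F_k'|_{x_i'\gets 0}/x_k=\hat F_k|_{x_i\gets 0}/x_k$, using $\hat F_k'=\hat F_k$ from case (i) and $x_j'=x_j$. Since $\hat F_k|_{x_i\gets 0}$ is independent of $x_k$, composing the two substitutions is the identity on $F_i$, so the raw polynomial produced before any pruning already coincides with $F_i$. What is still needed is that the common-factor removal in step 5 and the monomial re-normalization in step 6 are also self-inverse. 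I would close this by a uniqueness argument: the polynomial $F_i$ itself satisfies (LP1)--(LP2), and the conditions in step 6 characterize $M_i''H_i''$ uniquely; hence $F_i''=M_i''H_i''=F_i$.

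The principal obstacle is step 5, the removal of common factors with $\hat F_k|_{x_i\gets 0}$: one must check that no genuine factor of $F_i$ is stripped on the first mutation and that every factor stripped is faithfully restored on the second. This rests on the irreducibility of $F_i$ asserted in (LP1) and on the fact that $\hat F_k|_{x_i\gets 0}$ does not involve $x_i$, which decouples the common-factor cancellation from the direction in which the substitution is applied. Once these two observations are in place, the uniqueness of $M_i$ in step 6 lets the argument close, yielding $\mu_k^2(t)=t$.
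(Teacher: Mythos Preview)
The paper does not give its own proof of this proposition; it is quoted from \cite{LP} without argument, so there is nothing in the paper to compare your approach against. I can only assess the internal soundness of your sketch.

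Your decomposition into cases (i)--(iii) is the natural one, but case~(i) contains a genuine gap. You assert that ``the data used to produce $\hat F_k'$ from the mutated seed is identical to that used to produce $\hat F_k$'' because $F_k'=F_k$ and $x_j'=x_j$ for $j\neq k$. That is not true: the uniqueness clause in step~1 pins down $\hat F_k$ through the divisibility conditions ``$\hat F_k|_{x_j\gets F_j/x_j}$ is not divisible by $F_j$'' for every $j\neq k$, and those conditions involve the \emph{other} exchange polynomials $F_j$, which have been replaced by the (generally different) $F_j'$ in the mutated seed. So the inputs to step~1 really have changed, and $\hat F_k'=\hat F_k$ is a statement that requires proof, not an observation. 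In \cite{LP} this is handled by first proving that the full collection of exchange Laurent polynomials $\{\hat F_j\}$ is invariant under mutation, which is itself a nontrivial lemma; only after that does $x_k''=x_k$ follow.

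Case~(iii) has a related weakness. Your ``uniqueness argument'' for $F_i''=F_i$ presupposes that you already know $F_i''$ and $F_i$ agree up to a unit and a monomial, so that the normalization in step~6 forces equality. But step~5 can strip an honest polynomial factor, and you have not shown that whatever is stripped on the first pass is exactly restored on the second. Saying that $F_i$ is irreducible (LP1) only tells you that the first pass does not remove a nontrivial factor \emph{of $F_i$ itself}; it does not control factors of $G_i$ arising from the substitution, nor does it by itself reverse the process. The actual argument in \cite{LP} tracks the effect of steps~4--6 more carefully and uses the invariance of the $\hat F_j$ established earlier. As written, your sketch identifies the right obstacles but does not yet clear them.
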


\begin{prop}{\rm \cite{LP}}
Suppose that we mutate at $x_k$.
Then $F'_i$ depends on $x'_k$ if and only if $F_i$ depends on $x_k$.
\end{prop}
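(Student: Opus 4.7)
The plan is to handle the two directions of the equivalence separately. The ``only if'' direction is immediate: if $F_i$ does not depend on $x_k$, then step 3 of the mutation sets $F'_i = F_i$, and since $x'_j = x_j$ for all $j \neq k$, the polynomial $F'_i$ simply does not involve $x'_k$.

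For the substantive ``if'' direction, the approach is to trace the construction of $F'_i$ through steps 4--6 and check that $x'_k$-dependence is preserved at each stage. Write $F_i = \sum_{j=0}^d c_j x_k^j$ as a polynomial in $x_k$ with coefficients in $\mathbb{Z}[x_l : l \neq i, k]$; since $F_i$ depends on $x_k$ we have $d \geq 1$ and $c_d \neq 0$, and since (LP1) forbids $F_i$ from being divisible by $x_k$ we also have $c_0 \neq 0$. By the note in step 4, the hypothesis that $F_i$ depends on $x_k$ forces $a_i = 0$, so the substitution $x_i \gets 0$ in $\hat F_k$ is well-defined; moreover $A := \left.\hat F_k\right|_{x_i \gets 0}$ is nonzero because (LP1) also guarantees that $F_k$ is not divisible by $x_i$. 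Performing the substitution $x_k \mapsto A/x'_k$ yields
\begin{equation*}
G_i = \sum_{j=0}^d c_j A^j (x'_k)^{-j},
\end{equation*}
and viewing this as a Laurent polynomial in $x'_k$ over the other variables, the coefficients of $(x'_k)^0$ and of $(x'_k)^{-d}$ are $c_0$ and $c_d A^d$ respectively, both nonzero. So $G_i$ genuinely depends on $x'_k$.

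It remains to confirm that steps 5 and 6 cannot destroy this dependence. In step 5, we divide $G_i$ by common factors with $A$ that lie in $\mathbb{Z}[x_l : l \neq k, i]$, a ring which does not contain $x'_k$; any such factor must divide each coefficient of $G_i$ separately in its expansion as a Laurent polynomial in $x'_k$, so after removal the same two powers $(x'_k)^0$ and $(x'_k)^{-d}$ still appear with nonzero coefficient, and hence $H_i$ depends on $x'_k$. In step 6, $F'_i = M_i H_i$ for a Laurent monomial $M_i$ in the $x'_l$, and multiplication by a monomial only shifts exponents of $x'_k$ by a common amount, which cannot collapse two distinct powers into one; so $F'_i$ inherits the $x'_k$-dependence of $H_i$. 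The main place where care is required, and which I anticipate to be the principal potential obstacle, is establishing the nonvanishing $c_0 \neq 0$ and $A \neq 0$; both follow directly from the primitivity clauses of (LP1) applied to $F_i$ and $F_k$ respectively, so the argument should go through cleanly.
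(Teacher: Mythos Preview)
Your argument is correct. The paper itself does not supply a proof of this proposition at all: it is stated with the citation \cite{LP} and used as an input, so there is no ``paper's own proof'' to compare against. What you have written is a self-contained verification, tracing the mutation steps 3--6 and checking that dependence on $x'_k$ survives each one; the key observations---that (LP1) forces both $c_0\neq 0$ and $A=\hat F_k|_{x_i\gets 0}\neq 0$, that the factors removed in step~5 lie in a ring not containing $x'_k$, and that a monomial multiplication in step~6 cannot collapse distinct $x'_k$-degrees---are all sound. One small remark: your justification that removing a factor $f\in\mathbb{Z}[x_l:l\neq i,k]$ from $G_i$ preserves the extremal $x'_k$-coefficients is cleanest if you note explicitly that writing $G_i=f\cdot H_i$ and expanding both sides in powers of $x'_k$ forces $f$ to divide each coefficient $g_j$, so $h_j\neq 0$ exactly when $g_j\neq 0$; you say this in prose but it is worth making the equation visible.
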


\begin{dfn}{\rm \cite{LP}}
Let us fix a seed $t=(\bm{x},\bm{F})$.
This seed is called an initial seed.
Let $X(t)$ be the set of all the cluster variables obtained by iterative mutations to the initial seed $t$.
Let $\mathcal{A}(t)$ be a Laurent phenomenon algebra defined as
\begin{equation}
\mathcal{A}(t)=\mathbb{Q}[x\in X(t)]\subset\mathbb{Q}(\bm{x}).
\end{equation}
\end{dfn}

\begin{thm}{\rm \cite{LP}}\label{Laurent}
Let $t=(\bm{x},\bm{F})$ be a initial seed.
If $x\in X(t)$, then $x\in\mathbb{Z}[\bm{x}^\pm]$.
\end{thm}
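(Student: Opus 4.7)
The plan is to prove the theorem by induction on the length of the mutation sequence producing $x$. The base case of zero mutations is immediate, and the base case of one mutation follows directly from the definition: after mutating at index $k$, the new cluster variable is $x_k' = \hat{F}_k/x_k$, and since $\hat{F}_k$ is a Laurent monomial in $\bm{x}$ times $F_k \in \mathbb{Z}[\bm{x}]$, we have $x_k' \in \mathbb{Z}[\bm{x}^{\pm}]$.

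For the inductive step, suppose every cluster variable obtainable by at most $n$ mutations is a Laurent polynomial in $\bm{x}$, and consider $y$ reached by the sequence $\mu_{i_{n+1}} \mu_{i_n} \cdots \mu_{i_1}$. By the involution property of the previous proposition we may assume $i_{n+1} \neq i_n$. The inductive hypothesis shows that $y$ is a Laurent polynomial in the cluster variables $x_i^{(n)}$ of $t^{(n)} = \mu_{i_n} \cdots \mu_{i_1}(t)$, and that each $x_i^{(n)}$ is itself a Laurent polynomial in $\bm{x}$; what must be upgraded is the cancellation of the denominators introduced when substituting the $x_i^{(n)}$ into $y$. I would handle this by a caterpillar-type argument analogous to the Fomin--Zelevinsky proof for cluster algebras: rather than the original path from $t$ to the seed carrying $y$, I would construct a detour passing through the seed $\mu_{i_{n+1}}(t)$ and use the involution to shorten the remaining path, so that the Laurent property in $\bm{x}$ for $y$ reduces to the Laurent property in cluster variables of a seed already controlled by the induction.

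The main obstacle is the coprimality statement needed to make this transport of Laurent expressions go through. Concretely, one must show that for any two seeds $t'$ and $t'' = \mu_k(t')$ linked by a single mutation, the cluster variables $x_k'$ and $x_k''$ are coprime in $\mathbb{Z}[(\bm{x}')^{\pm}]$ (and in $\mathbb{Z}[(\bm{x}'')^{\pm}]$), and more generally that the ``exchange polynomials after substitution'' acquire no unexpected common factors. This is where the full strength of conditions (LP1) and (LP2) enters: the irreducibility of each $F_k$ and its non-divisibility by any $x_j$ rule out parasitic factors, while the definition of $H_i$ (removal of common factors with $\hat{F}_k|_{x_i \gets 0}$) and of the monomial $M_i$ are tailored so that the induction transmits Laurent-ness from one seed to the next without loss. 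Bookkeeping these factors through step~5 and step~6 of the mutation, and aligning them with the detour in the caterpillar construction, is the technical heart of the argument and the step I would expect to occupy the bulk of the proof.
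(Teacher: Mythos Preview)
The paper does not prove this theorem at all: it is quoted verbatim from Lam--Pylyavskyy \cite{LP} and no argument is supplied. So there is no ``paper's own proof'' to compare your attempt against; the relevant comparison is with the original proof in \cite{LP}.

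Your overall strategy---induction along a mutation path, with the key step being a caterpillar-type reduction plus coprimality coming from (LP1)--(LP2) and the normalisation in steps~5--6 of the mutation---is indeed the approach taken in \cite{LP}, and your final paragraph correctly identifies the technical heart: the coprimality of $x_k'$ and $x_k''$ across a single mutation, and the control of common factors built into the definition of $H_i$ and $M_i$.

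There is, however, a genuine gap in your description of the inductive step. The ``detour passing through the seed $\mu_{i_{n+1}}(t)$'' does not shorten anything: from $\mu_{i_{n+1}}(t)$ there is no path of length $\le n$ to the seed carrying $y$, because mutations at distinct indices do not commute in general. The correct move is to change the \emph{base} of the induction, not the endpoint: apply the inductive hypothesis with base seed $t^{(1)}=\mu_{i_1}(t)$ (path length $n$) to write $y$ as a Laurent polynomial in the cluster of $t^{(1)}$, and then use the coprimality statements to show that the only possible non-monomial denominator, namely a power of $\hat F_{i_1}$, in fact cancels. This requires a stronger inductive hypothesis (Laurent-ness from \emph{every} seed at distance $\le n$, not just from $t$) and, in the LP setting, a depth-three check along the caterpillar together with the gcd conditions; your sketch gestures at this but does not set it up. As written, the induction does not close.
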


\subsection{Period-1 seeds}

For a polynomial
\begin{equation}
f(x_0,x_1,\dots,x_{N-1})\in\mathbb{Z}[x_0,x_1,\dots,x_{N-1}]
\end{equation}
and for $j\geq 1$, we define
\begin{equation}
u^j(f)=f(x_j,x_{j+1},\dots,x_{j+N-1})\in\mathbb{Z}[x_j,x_{j+1},\dots,x_{j+N-1}].
\end{equation}
We also abbreviate $u:=u^1$.

\begin{dfn}{\rm \cite{period}}\label{period}
A seed
\begin{equation}
t=\{(x_0,F_0),(x_1,F_1),\dots,(x_{N-1},F_{N-1})\}
\end{equation}
is called a period-1 seed if 
\begin{equation}
\mu_0(t)=\{(x_N,u(F_{N-1})),(x_1,u(F_0)),(x_2,u(F_1)),\dots,(x_{N-1},u(F_{N-2}))\}
\end{equation}
holds, where $x_N=x'_0$ is the new cluster variable obtained by the mutation at $x_0$.
\end{dfn}

We assume that
\begin{equation}
t_0=\{(x_0,F_0),(x_1,F_1),\dots,(x_{N-1},F_{N-1})\}
\end{equation}
is a period-1 seed.
We inductively define the seed $t_n$ by $t_n=\mu_{n-1}(t_{n-1})$.
The  new cluster variable $x_n'$ in $\mu_n(t_n)$ is expressed as $x_{n+N}$. Hereafter, we write this process as $x_n \to x_{n+N}$.
We have
\begin{equation}
t_n=\{(x_n,u^n(F_0)),(x_{n+1},u^n(F_1)),\dots,(x_{n+N-1},u^n(F_{N-1}))\}.
\end{equation}
If an exchange polynomial of a period-1 seed satisfies $\hat{F}_0=F_0$, then the cluster variables satisfy $x_{n+N}x_n=u^n(F_0)$.
In the following sections, all exchange polynomials satisfy $\hat{F}_i=F_i$.
We put $F_0=f(x_1,x_2,\dots,x_{N-1})$.
All cluster variables satisfy the difference equation
\begin{equation}
x_{n+N}x_n=f(x_{n+1},x_{n+2},\dots,x_{n+N-1}).
\end{equation}

Several results about a period-1 seed have already known.
\begin{itemize}
\item All the rank 2 or 3 period-1 seeds have been obtained \cite{period}.

\item An example of a rank 6 period-1 seed has been obtained \cite{LP}.
The cluster variables of this seed satisfy the difference equation
\begin{equation}
x_{n+6}x_n=x_{n+5}x_{n+1}+x_{n+4}x_{n+2}+x_{n+3}^2.
\end{equation}
This difference equation is called Somos-6.

\item Several rank $N$ period-1 seeds have been obtained \cite{period}.
For example, the cluster variables of these seeds satisfy the following difference equations:
\begin{equation}
x_{n+N}x_n
=\prod_{i=1}^{N-1}x_{n+i}^{a_i}+1\quad(a_i\in\mathbb{Z}_{>0}),
\end{equation}
\begin{equation}
x_{n+N}x_n
=x_{n+1}x_{n+N-1}+A\sum_{i=1}^{N-1}x_{n+i}+B\quad(A,B\in\mathbb{Z}).
\end{equation}
\end{itemize}

\section{Seed of the discrete BKP equation}

\subsection{Seed of the discrete BKP equation and its mutation}

In this section, we consider the infinite rank seed.
For $i\leq j\ (i,j\in\Z)$, we put $[i,j]=\{i,i+1,\dots,j\}$.
For a polynomial $P\in\mathbb{Z}[x_n^{m,l}|n,m,l\in\mathbb{Z}]$, let $s_i^{j,k}(P)$ be the polynomial in which we substituted all variables $x_n^{m,l}$ for $x_{n+i}^{m+j,l+k}$ in $P$.
We take 
\begin{equation}\label{dBKP seed}
t_0=\left\{\left.\left(x_{i-2j-k}^{j-k,k},F_{i-2j-k}^{j-k,k}\right)\right|i\in[0,5],j,k\in\mathbb{Z}\right\}
\end{equation}
as an initial seed, where exchange polynomials are defined as
\begin{equation}\label{dBKP polynomial}
\begin{aligned}
F_{0}^{0,0}
&=x_0^{1,1}x_1^{0,0}
+x_0^{1,0}x_1^{0,1}
+x_0^{0,1}x_1^{1,0},\\
F_{1}^{0,0}
&=x_0^{1,0}x_1^{0,1}x_2^{0,0}
+x_0^{0,1}x_1^{1,0}x_2^{0,0}
+x_0^{0,0}x_1^{1,0}x_2^{0,1}
+x_0^{0,0}x_1^{0,1}x_2^{1,0},\\
F_{2}^{0,0}
&=x_1^{1,0}x_2^{-1,0}x_2^{0,1}x_3^{0,0}
+x_1^{0,1}x_2^{-1,0}x_2^{1,0}x_3^{0,0}\\
&\quad+x_1^{0,0}x_2^{-1,1}x_2^{1,0}x_3^{0,0}
+x_1^{0,0}x_2^{-1,0}x_2^{0,1}x_3^{1,0}
+x_1^{0,0}x_2^{0,1}x_2^{1,0}x_3^{-1,0},\\
F_{3}^{0,0}
&=x_4^{-1,0}x_3^{1,0}x_3^{0,-1}x_2^{0,0}
+x_4^{0,-1}x_3^{1,0}x_3^{-1,0}x_2^{0,0}\\
&\quad+x_4^{0,0}x_3^{1,-1}x_3^{-1,0}x_2^{0,0}
+x_4^{0,0}x_3^{1,0}x_3^{0,-1}x_2^{-1,0}
+x_4^{0,0}x_3^{0,-1}x_3^{-1,0}x_2^{1,0},\\
F_{4}^{0,0}
&=x_5^{-1,0}x_4^{0,-1}x_3^{0,0}
+x_5^{0,-1}x_4^{-1,0}x_3^{0,0}
+x_5^{0,0}x_4^{-1,0}x_3^{0,-1}
+x_5^{0,0}x_4^{0,-1}x_3^{-1,0},\\
F_{5}^{0,0}
&=x_5^{-1,-1}x_4^{0,0}
+x_5^{-1,0}x_4^{0,-1}
+x_5^{0,-1}x_4^{-1,0},
\end{aligned}
\end{equation}
and
\begin{equation}
F_{i-2j-k}^{j-k,k}=s_{-2j-k}^{j-k,k}\left(F_i^{0,0}\right).
\end{equation}
For example, $F_0^{1,1}$ is defined as
\begin{equation}
F_0^{1,1}
=F_{5-2\cdot2-1}^{2-1,1}
=s_{-5}^{1,1}\left(F_5^{0,0}\right)
=x_0^{0,0}x_{-1}^{1,1}+x_0^{0,1}x_{-1}^{1,0}+x_0^{1,0}x_{-1}^{0,1}.
\end{equation}
We define $\tilde{u}:=s_1^{0,0}$ and the set of cluster variables $X_i$ as
\begin{equation}
X_i:=\left\{\left.x_{i-2j-k}^{j-k,k}\right|j,k\in\mathbb{Z}\right\}.
\end{equation}
A mutation $m_i^{j,k}\ (i,j,k\in\mathbb{Z})$ denotes the mutation at $x_{i-2j-k}^{j-k,k}$.
We define the iteration of the mutations $\nu_i^n\ (i\in\mathbb{Z},n\in\mathbb{Z}_{\geq0})$ by
\begin{equation}
\begin{aligned}
\nu_i^n=\ &m_i^{n-1,-1}\circ\dots
\circ m_i^{1,-n+1}\circ m_i^{0,-n}\circ m_i^{-1,-n+1}\circ\dots
\circ m_i^{-n+1,-1}\circ m_i^{-n,0}\\
&\circ m_i^{-n+1,1}\circ\dots
\circ m_i^{-1,n-1}\circ m_i^{0,n}\circ m_i^{1,n-1}\circ\dots
\circ m_i^{n-1,1}\circ m_i^{n,0}\ (n\neq0),\\
\nu_i^0=\ &m_i^{0,0}.
\end{aligned}
\end{equation}
This $\nu_i^n$ is the iteration of the mutations at each and all $x_{i-2j-k}^{j-k,k}\in X_i\ (j+k=n)$ just once.
We define the iteration of the mutations $\tilde{\mu}_i\ (i\in\mathbb{Z})$ by
\begin{equation}
\tilde{\mu}_i=\dots\circ\nu_i^2\circ\nu_i^1\circ\nu_i^0.
\end{equation}
This $\tilde{\mu}_i$ is the iteration of the mutations at each and all $x\in X_i$ just once.
We also use the notation $x \to y$ that means a new cluster variable $x'$ is denoted by $y$, \textit{i.e.}, $y=x'$.

\begin{prop}\label{dBKP1}
We define the seed $t_1$ by $t_1=\tilde{\mu}_0(t_0)$.
When we put $x_{-2j-k}^{j-k,k}\to x_{1-2j-k}^{1+j-k,1+k}$, it holds that
\begin{equation}\label{dBKP1 seed1}
t_1=\left\{\left.\left(x_{1+i-2j-k}^{j-k,k},\tilde{u}\left(F_{i-2j-k}^{j-k,k}\right)\right)\right|i\in[0,5],j,k\in\mathbb{Z}\right\}.
\end{equation}
The new cluster variables $x_{1-2j-k}^{1+j-k,1+k}$ satisfy
\begin{equation}\label{eq}
x_{1-2j-k}^{1+j-k,1+k}x_{-2j-k}^{j-k,k}=F_{-2j-k}^{j-k,k}.
\end{equation}
\end{prop}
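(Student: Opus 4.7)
I introduce the \emph{type} of a cluster variable $x_a^{b,c}$ as $a+2b+3c$; the set $X_i$ coincides with the type-$i$ slice, and $t_0$ consists of all cluster variables of type in $[0,5]$. Direct inspection of the six polynomials in~(\ref{dBKP polynomial}) shows that $F_0^{0,0}$ involves no type-$0$ cluster variable, and the shifts $s_{-2j-k}^{j-k,k}$ are type-preserving. Consequently, every polynomial $F_{-2j-k}^{j-k,k}$ attached to a member of $X_0$ is independent of every other member of $X_0$. Together with (LP2) this implies that the polynomials attached to the cluster variables in $X_0$ remain unchanged throughout the iteration $\tilde\mu_0$; at each step the polynomial to be mutated is exactly the shifted version of the corresponding one in~(\ref{dBKP polynomial}).

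To obtain~(\ref{eq}), I show $\hat F_0^{0,0}=F_0^{0,0}$; the general statement follows by translation through $s_{-2j-k}^{j-k,k}$. The polynomial $F_0^{0,0}=x_0^{1,1}x_1^{0,0}+x_0^{1,0}x_1^{0,1}+x_0^{0,1}x_1^{1,0}$ is irreducible and is not divisible by any single cluster variable, so only the second condition of step~1 of the mutation definition needs verification: for each of the six cluster variables $x_{a'}^{b',c'}\in\{x_0^{1,1},x_1^{0,0},x_0^{1,0},x_1^{0,1},x_0^{0,1},x_1^{1,0}\}$, the Laurent polynomial $F_0^{0,0}|_{x_{a'}^{b',c'}\gets F_{a'}^{b',c'}/x_{a'}^{b',c'}}$ must not be divisible by $F_{a'}^{b',c'}$. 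These are six explicit but routine checks using the formulas in~(\ref{dBKP polynomial}); they force all exponents $a_i$ in the definition of $\hat F$ to vanish, so $\hat F_0^{0,0}=F_0^{0,0}$, and step~2 then yields~(\ref{eq}).

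Next I verify the form of the exchange polynomials in~(\ref{dBKP1 seed1}). By translation it suffices to treat the six representative polynomials $F_i^{0,0}$ for $i\in[0,5]$. By direct inspection, the type-$0$ cluster variables appearing in $F_i^{0,0}$ are: none for $i=0$; $x_0^{0,0}$ for $i=1$; $x_2^{-1,0}$ for $i=2$; $x_3^{0,-1}$ and $x_2^{-1,0}$ for $i=3$; $x_3^{0,-1}$ for $i=4$; and $x_5^{-1,-1}$ for $i=5$. For each $i\in\{1,\ldots,5\}$ I apply steps~4--6 of the mutation definition once per type-$0$ dependency (substitute $x_k\gets \hat F_k|_{x_i\gets 0}/x'_k$ into $F_i^{0,0}$, factor off the common divisor with $\hat F_k|_{x_i\gets 0}$, and multiply by the appropriate Laurent monomial $M_i$ in the new cluster variables); the result in each case is exactly $\tilde u(F_{i-1}^{0,0})$. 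For the new type-$6$ cluster variables themselves (the $i=5$ slice of the parametrization), step~3 of the mutation gives that the new exchange polynomial is the unchanged $F_{-2j-k}^{j-k,k}$, and matching this with the claimed form reduces to the single identity $s_4^{-1,-1}(F_0^{0,0})=F_5^{0,0}$, which is directly visible in~(\ref{dBKP polynomial}).

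The main obstacle is the computation for $F_3^{0,0}$, which carries two type-$0$ dependencies and requires two successive substitutions followed by a careful factorization to reach $\tilde u(F_2^{0,0})$; one must check that the cross-terms cancel cleanly so that the final Laurent monomial $M_3$ is well defined. The single-substitution cases for $F_1^{0,0}$, $F_2^{0,0}$, $F_4^{0,0}$, and $F_5^{0,0}$ provide a template for this longer calculation.
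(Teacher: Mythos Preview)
Your outline follows the paper's approach closely: list the $X_0$-dependencies of each $F_i^{0,0}$, observe that the $X_0$-polynomials are mutually independent so they never change, verify $\hat F_0^{0,0}=F_0^{0,0}$ to get~\eqref{eq}, and then run steps~4--6 for each $i\in[1,5]$. The identification of the new polynomial at the mutated site with $\tilde u(F_5^{\,\cdot})$ via $s_4^{-1,-1}(F_0^{0,0})=F_5^{0,0}$ is also exactly what the paper uses.

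There is, however, a genuine gap in the $i=3$ case. You write ``by translation it suffices to treat the six representative polynomials $F_i^{0,0}$'' and then perform ``two successive substitutions'' for $F_3^{0,0}$. But the two type-$0$ variables on which $F_{3-2j-k}^{j-k,k}$ depends sit at the $(j',k')$-indices $(j-1,k-1)$ and $(j-1,k)$, and in the fixed order $\tilde\mu_0=\cdots\circ\nu_0^2\circ\nu_0^1\circ\nu_0^0$ these are mutated at levels $|j-1|+|k-1|$ and $|j-1|+|k|$ respectively. Which of the two comes first therefore depends on the sign of $k-\tfrac12$: for $(j,k)=(0,0)$ the variable $x_2^{-1,0}$ is hit before $x_3^{0,-1}$, whereas for $(j,k)=(0,2)$ the order is reversed. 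So a single computation at the representative $(j,k)=(0,0)$ does \emph{not} transport by translation to all $(j,k)$; you must check both orders and see that they both yield $\tilde u(F_{2-2j-k}^{j-k,k})$. The paper does precisely this, exhibiting two different intermediate polynomials $P_1$ and $P_2$ (one for each order) and verifying that the second mutation collapses each to the same result. Without this step your translation argument is incomplete, and the claim that $\tilde\mu_0$ is well defined regardless of the order within each $\nu_0^n$ (needed even for the case where both dependencies lie at the same level, which admittedly does not occur here) would also be unjustified.
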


\begin{proof}
If $F_n^{m,l}$ does not depend on $x_{-2j-k}^{j-k,k}$, then it holds $\left(F_n^{m,l}\right)'=F_n^{m,l}$, when we mutate at $x_{-2j-k}^{j-k,k}\in X_0$.
The dependence on $x\in X_0$ of the exchange polynomials is as follows:
\begin{equation}\label{depend}
\begin{aligned}
&F_{-2j-k}^{j-k,k}\ \mbox{does not depend on}\ x\in X_0.\\
&F_{1-2j-k}^{j-k,k}\ \mbox{depends on only}\ x_{-2j-k}^{j-k,k}\in X_0.\\
&F_{2-2j-k}^{j-k,k}\ \mbox{depends on only}\ x_{-2(j-1)-k}^{(j-1)-k,k}\in X_0.\\
&F_{3-2j-k}^{j-k,k}\ \mbox{depends on only}\ x_{-2(j-1)-(k-1)}^{(j-1)-(k-1),k-1},x_{-2(j-1)-k}^{(j-1)-k,k}\in X_0.\\
&F_{4-2j-k}^{j-k,k}\ \mbox{depends on only}\ x_{-2(j-1)-(k-1)}^{(j-1)-(k-1),k-1}\in X_0.\\
&F_{5-2j-k}^{j-k,k}\ \mbox{depends on only}\ x_{-2(j-2)-(k-1)}^{(j-2)-(k-1),k-1}\in X_0.
\end{aligned}
\end{equation}
For each $i\in[0,5]$, we consider the change of $\left(x_{i-2j-k}^{j-k,k},F_{i-2j-k}^{j-k,k}\right)$ by the mutations of initial seed $t_0$ at each and all $x\in X_0$ just once.
\begin{itemize}

\item \underline{The case of $i=0$}

By \eqref{depend}, the exchange polynomials $F_{-2j-k}^{j-k,k}$ do not change by the mutations at $x\in X_0$.
For each seed
\begin{equation}
\begin{aligned}
\tilde{t}
&=\left\{\left.\left(\tilde{x}_{i-2j-k}^{j-k,k},\tilde{F}_{i-2j-k}^{j-k,k}\right)\right|i\in[0,5],j,k\in\mathbb{Z}\right\}\\
&=\left\{\left.\left(\tilde{x}_{-2j-k}^{j-k,k},F_{-2j-k}^{j-k,k}\right),\left(\tilde{x}_{i-2j-k}^{j-k,k},\tilde{F}_{i-2j-k}^{j-k,k}\right)\right|i\in[1,5],j,k\in\mathbb{Z}\right\}
\end{aligned}
\end{equation}
obtained by the mutations at $x\in X_0$, if $\left.F_{-2j-k}^{j-k,k}\right|_{\tilde{x}_n^{m,l}\gets \tilde{F}_n^{m,l}/x}$ is not divisible by $\tilde{F}_n^{m,l}$ for any $(n,m,l)\neq(-2j-k,j-k,k)$, then it holds that $\hat{F}_{-2j-k}^{j-k,k}=F_{-2j-k}^{j-k,k}$.
In fact, all exchange polynomials which appear below satisfy $\hat{F}_{-2j-k}^{j-k,k}=F_{-2j-k}^{j-k,k}$.
By the mutation at $x_{-2j-k}^{j-k,k}$, the change of the cluster variable $x_{-2j-k}^{j-k,k}$ is
\begin{equation}
x_{-2j-k}^{j-k,k}
\xrightarrow{\mu_{-2j-k}^{j-k,k}}
x_{1-2j-k}^{1+j-k,1+k}
=\hat{F}_{-2j-k}^{j-k,k}/x_{-2j-k}^{j-k,k}=F_{-2j-k}^{j-k,k}/x_{-2j-k}^{j-k,k}
\end{equation}
and other cluster variables $x_{-2j'-k'}^{j'-k',k'}\ ((j',k')\neq(j,k))$ do not change, where $\mu_n^{m,l}\ (n,m,l\in\mathbb{Z})$ is the mutation at $x_n^{m,l}$.
Therefore, we have
\begin{equation}
\begin{aligned}
\left(x_{-2j-k}^{j-k,k},F_{-2j-k}^{j-k,k}\right)
\to&\left(x_{1-2j-k}^{1+j-k,1+k},F_{-2j-k}^{j-k,k}\right)\\
&=\left(x_{6-2(j+2)-(k+1)}^{(j+2)-(k+1),k+1},\tilde{u}\left(F_{5-2(j+2)-(k+1)}^{(j+2)-(k+1),k+1}\right)\right)
\end{aligned}
\end{equation}
by the mutations of initial seed $t_0$ at each and all $x\in X_0$ just once.

\item \underline{The case of $i=1$}

By \eqref{depend}, the exchange polynomials $F_{1-2j-k}^{j-k,k}$ do not change until we execute the mutation at $x_{-2j-k}^{j-k,k}$.
Let $\lambda_1$ and $\lambda_2$ be the iteration of mutations before and after the mutation at $x_{-2j-k}^{j-k,k}$ respectively.
We have
\begin{equation}
\left(F_{1-2j-k}^{j-k,k},F_{-2j-k}^{j-k,k}\right)
\xrightarrow{\lambda_1}
\left(F_{1-2j-k}^{j-k,k},F_{-2j-k}^{j-k,k}\right).
\end{equation}
We mutate at $x_{-2j-k}^{j-k,k}$ and calculate $\left(F_{1-2j-k}^{j-k,k}\right)'$ from $F_{1-2j-k}^{j-k,k}$ and $F_{-2j-k}^{j-k,k}$.
It holds that
\begin{equation}
\left(F_{1-2j-k}^{j-k,k},F_{-2j-k}^{j-k,k}\right)
\xrightarrow{\mu_{-2j-k}^{j-k,k}}
\left(\tilde{u}\left(F_{-2j-k}^{j-k,k}\right),F_{-2j-k}^{j-k,k}\right).
\end{equation}
$\tilde{u}\left(F_{-2j-k}^{j-k,k}\right)$ does not depend on any $x\in X_0$.
We have
\begin{equation}
\tilde{u}\left(F_{-2j-k}^{j-k,k}\right)
\xrightarrow{\lambda_2}
\tilde{u}\left(F_{-2j-k}^{j-k,k}\right).
\end{equation}
The cluster variable $x_{1-2j-k}^{j-k,k}$ does not change by the mutations at $x\in X_0$.
Therefore, we have
\begin{equation}
\left(x_{1-2j-k}^{j-k,k},F_{1-2j-k}^{j-k,k}\right)
\to \left(x_{1-2j-k}^{j-k,k},\tilde{u}\left(F_{-2j-k}^{j-k,k}\right)\right)
\end{equation}
by mutations of the initial seed $t_0$ at each and all $x\in X_0$ just once.

\item \underline{The cases of $i=2,4,5$}

The cases $for$ $i=2,4,5$ are the same as the case of $i=1$.
It holds that
\begin{equation}
\left(x_{i-2j-k}^{j-k,k},F_{i-2j-k}^{j-k,k}\right)
\to \left(x_{i-2j-k}^{j-k,k},\tilde{u}\left(F_{-1+i-2j-k}^{j-k,k}\right)\right)\quad (i=2,4,5)
\end{equation}
by the mutations of initial seed $t_0$ at each and all $x\in X_0$ just once.

\item \underline{The case of $i=3$}

By \eqref{depend}, the exchange polynomials $F_{3-2j-k}^{j-k,k}$ do not change until we execute the mutation at $x_{3-2j-k}^{j-k,k-1}$ or $x_{2-2j-k}^{-1+j-k,k}$.
We consider two cases in which the order of the mutations at $x_{3-2j-k}^{j-k,k-1}$ and $x_{2-2j-k}^{-1+j-k,k}$ is different from each other.
\begin{itemize}
\item Suppose that we mutate at $x_{3-2j-k}^{j-k,k-1}$ before the mutation at $x_{2-2j-k}^{-1+j-k,k}$.
Let $\lambda_1,\lambda_2$, and $\lambda_3$ be the iteration of mutations before the mutation at $x_{3-2j-k}^{j-k,k-1}$, after the mutation at $x_{3-2j-k}^{j-k,k-1}$ and before the mutation at $x_{2-2j-k}^{-1+j-k,k}$, and after the mutation at $x_{2-2j-k}^{-1+j-k,k}$ respectively.
We have
\begin{equation}
\left(F_{3-2j-k}^{j-k,k},F_{3-2j-k}^{j-k,k-1},F_{2-2j-k}^{-1+j-k,k}\right)
\xrightarrow{\lambda_1}
\left(F_{3-2j-k}^{j-k,k},F_{3-2j-k}^{j-k,k-1},F_{2-2j-k}^{-1+j-k,k}\right).
\end{equation}
We mutate at $x_{3-2j-k}^{j-k,k-1}$ and calculate $\left(F_{3-2j-k}^{j-k,k}\right)'$ from $F_{3-2j-k}^{j-k,k}$ and $F_{3-2j-k}^{j-k,k-1}$.
Then we have
\begin{equation}
\begin{aligned}
&\left(F_{3-2j-k}^{j-k,k},F_{3-2j-k}^{j-k,k-1},F_{2-2j-k}^{-1+j-k,k}\right)\\
&\xrightarrow{\mu_{3-2j-k}^{j-k,k-1}}
\left(s_{-2j-k}^{j-k,k}(P_1),F_{3-2j-k}^{j-k,k-1},F_{2-2j-k}^{-1+j-k,k}\right),
\end{aligned}
\end{equation}
\begin{equation}
P_1
:=x_2^{-1,0}x_3^{1,0}x_4^{0,0}
+x_2^{1,0}x_3^{-1,0}x_4^{0,0}
+x_2^{0,0}x_3^{-1,0}x_4^{1,0}
+x_2^{0,0}x_3^{1,0}x_4^{-1,0}.
\end{equation}
However, $s_{-2j-k}^{j-k,k}(P_1)$ depends on only $x_{2-2j-k}^{-1+j-k,k}\in X_0$.
Hence, $s_{-2j-k}^{j-k,k}(P_1)$ does not change before the mutation at $x_{2-2j-k}^{-1+j-k,k}$.
We have
\begin{equation}
\left(s_{-2j-k}^{j-k,k}(P_1),F_{2-2j-k}^{-1+j-k,k}\right)
\xrightarrow{\lambda_2}
\left(s_{-2j-k}^{j-k,k}(P_1),F_{2-2j-k}^{-1+j-k,k}\right).
\end{equation}
Mutating at $x_{2-2j-k}^{-1+j-k,k}$ and calculating $\left(s_{-2j-k}^{j-k,k}(P_1)\right)'$ from $s_{-2j-k}^{j-k,k}(P_1)$ and $F_{2-2j-k}^{-1+j-k,k}$, we find
\begin{equation}
\left(s_{-2j-k}^{j-k,k}(P_1),F_{2-2j-k}^{-1+j-k,k}\right)
\xrightarrow{\mu_{2-2j-k}^{-1+j-k,k}}
\left(\tilde{u}\left(F_{2-2j-k}^{j-k,k}\right),F_{2-2j-k}^{-1+j-k,k}\right).
\end{equation}
Since $\tilde{u}\left(F_{2-2j-k}^{j-k,k}\right)$ does not depend on any $x\in X_0$,
we have
\begin{equation}
\tilde{u}\left(F_{2-2j-k}^{j-k,k}\right)
\xrightarrow{\lambda_3}
\tilde{u}\left(F_{2-2j-k}^{j-k,k}\right).
\end{equation}

\item Suppose that we mutate at $x_{2-2j-k}^{-1+j-k,k}$ before the mutation at $x_{3-2j-k}^{j-k,k-1}$.
Let $\lambda_1,\lambda_2$, and $\lambda_3$ be the iteration of mutations before the mutation at $x_{2-2j-k}^{-1+j-k,k}$, after the mutation at $x_{2-2j-k}^{-1+j-k,k}$ and before the mutation at $x_{3-2j-k}^{j-k,k-1}$, and after the mutation at $x_{3-2j-k}^{j-k,k-1}$ respectively.
We have
\begin{equation}
\left(F_{3-2j-k}^{j-k,k},F_{3-2j-k}^{j-k,k-1},F_{2-2j-k}^{-1+j-k,k}\right)
\xrightarrow{\lambda_1}
\left(F_{3-2j-k}^{j-k,k},F_{3-2j-k}^{j-k,k-1},F_{2-2j-k}^{-1+j-k,k}\right).
\end{equation}
By mutation at $x_{2-2j-k}^{-1+j-k,k}$ and calculation of $\left(F_{3-2j-k}^{j-k,k}\right)'$ from $F_{3-2j-k}^{j-k,k}$ and $F_{2-2j-k}^{-1+j-k,k}$, we find that
\begin{equation}
\begin{aligned}
&\left(F_{3-2j-k}^{j-k,k},F_{3-2j-k}^{j-k,k-1},F_{2-2j-k}^{-1+j-k,k}\right)\\
&\xrightarrow{\mu_{2-2j-k}^{-1+j-k,k}}
\left(s_{-2j-k}^{j-k,k}(P_2),F_{3-2j-k}^{j-k,k-1},F_{2-2j-k}^{-1+j-k,k}\right),
\end{aligned}
\end{equation}
\begin{equation}
\begin{aligned}
P_2
:=&x_2^{1,0}x_3^{-1,0}x_3^{0,-1}x_3^{0,1}x_4^{0,0}
+x_2^{0,1}x_3^{-1,0}x_3^{0,-1}x_3^{1,0}x_4^{0,0}\\
&+x_2^{0,0}x_3^{-1,0}x_3^{0,1}x_3^{1,-1}x_4^{0,0}
+x_2^{0,0}x_3^{-1,1}x_3^{0,-1}x_3^{1,0}x_4^{0,0}\\
&+x_2^{0,0}x_3^{-1,0}x_3^{0,1}x_3^{1,0}x_4^{0,-1}
+x_2^{0,0}x_3^{0,1}x_3^{0,-1}x_3^{1,0}x_4^{-1,0}.
\end{aligned}
\end{equation}
Since $s_{-2j-k}^{j-k,k}(P_2)$ depends on only $x_{3-2j-k}^{j-k,k-1}\in X_0$, $s_{-2j-k}^{j-k,k}(P_2)$ does not change before the mutation at $x_{3-2j-k}^{j-k,k-1}$.
We have
\begin{equation}
\left(s_{-2j-k}^{j-k,k}(P_2),F_{3-2j-k}^{j-k,k-1}\right)
\xrightarrow{\lambda_2}
\left(s_{-2j-k}^{j-k,k}(P_2),F_{3-2j-k}^{j-k,k-1}\right).
\end{equation}
Mutation at $x_{3-2j-k}^{j-k,k-1}$ and calculation of $\left(s_{-2j-k}^{j-k,k}(P_2)\right)'$ from $s_{-2j-k}^{j-k,k}(P_2)$ and $F_{3-2j-k}^{j-k,k-1}$ give
\begin{equation}
\left(s_{-2j-k}^{j-k,k}(P_2),F_{3-2j-k}^{j-k,k-1}\right)
\xrightarrow{\mu_{3-2j-k}^{j-k,k-1}}
\left(\tilde{u}\left(F_{2-2j-k}^{j-k,k}\right),F_{3-2j-k}^{j-k,k-1}\right).
\end{equation}
Since $\tilde{u}\left(F_{2-2j-k}^{j-k,k}\right)$ does not depend on any $x\in X_0$ we have
\begin{equation}
\tilde{u}\left(F_{2-2j-k}^{j-k,k}\right)
\xrightarrow{\lambda_3}
\tilde{u}\left(F_{2-2j-k}^{j-k,k}\right).
\end{equation}
\end{itemize}
The cluster variable $x_{3-2j-k}^{j-k,k}$ does not change by the mutations at $x\in X_0$ and we have
\begin{equation}
\left(x_{3-2j-k}^{j-k,k},F_{3-2j-k}^{j-k,k}\right)
\to \left(x_{3-2j-k}^{j-k,k},\tilde{u}\left(F_{2-2j-k}^{j-k,k}\right)\right)
\end{equation}
by the mutations of initial seed $t_0$ at each and all $x\in X_0$ just once.
\end{itemize}
Therefore, it holds that
\begin{equation}
\begin{aligned}
t_0
=&\left\{\left.\left(x_{-2j-k}^{j-k,k},F_{-2j-k}^{j-k,k}\right),
\left(x_{i-2j-k}^{j-k,k},F_{i-2j-k}^{j-k,k}\right)
\right|i\in[1,5],j,k\in\mathbb{Z}\right\}\\
\to t_1
=&\left\{\left.\left(x_{6-2(j+2)-(k+1)}^{(j+2)-(k+1),k+1},\tilde{u}\left(F_{5-2(j+2)-(k+1)}^{(j+2)-(k+1),k+1}\right)\right),\right.\right.\\
&\quad\left.\left.\left(x_{i-2j-k}^{j-k,k},\tilde{u}\left(F_{-1+i-2j-k}^{j-k,k}\right)\right)
\right|i\in[1,5],j,k\in\mathbb{Z}\right\},
\end{aligned}
\end{equation}
which does not depend on the order of mutations.
\qed
\end{proof}

Note that the iteration of the mutations $\tilde{\mu}_0$ does not depend on the order of the mutations.

\begin{thm}\label{dBKP2}
We defined the seed $t_n$ by $t_n=\tilde{\mu}_{n-1}(t_{n-1})$.
When we put $x_{n-1-2j-k}^{j-k,k}\to x_{n-2j-k}^{1+j-k,1+k}$, it holds that
\begin{equation}\label{dBKP1 seed2}
t_n=\left\{\left.\left(x_{n+i-2j-k}^{j-k,k},\tilde{u}^n\left(F_{i-2j-k}^{j-k,k}\right)\right)\right|i\in[0,5],j,k\in\mathbb{Z}\right\}.
\end{equation}
All cluster variables $x_n^{m,l}\ (n,m,l\in\mathbb{Z})$ satisfy
\begin{equation}\label{eq:dBKP}
x_{n+1}^{m+1,l+1}x_{n}^{m,l}
=x_{n}^{m+1,l+1}x_{n+1}^{m,l}
+x_{n+1}^{m+1,l}x_{n}^{m,l+1}
+x_{n+1}^{m,l+1}x_{n}^{m+1,l}.
\end{equation}
\end{thm}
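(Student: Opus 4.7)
The plan is induction on $n$, with the base case $n=1$ supplied by Proposition \ref{dBKP1}. The key observation is that the claimed description \eqref{dBKP1 seed2} of $t_n$ is obtained from the initial seed $t_0$ by applying $\tilde{u}^n$ uniformly to the bottom index of every cluster variable and of every variable appearing in each exchange polynomial. Since mutation is defined entirely by algebraic substitutions and cancellations among the exchange polynomials, it commutes with this relabeling. Hence the effect of $\tilde{\mu}_n$ on $t_n$ is, under the identification $x_a^{b,c}\leftrightarrow x_{a+n}^{b,c}$, the same as the effect of $\tilde{\mu}_0$ on $t_0$ already analyzed in Proposition \ref{dBKP1}: the dependence table \eqref{depend} and the order-independence established there transfer verbatim, and the inductive step follows.

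To extract the discrete BKP equation, I would specialise the mutation relation \eqref{eq} to step $n$. The mutation of $t_n$ at $x_{n-2j-k}^{j-k,k}\in X_n$ (the $i=0$ slice of $t_n$) produces the new variable $x_{n+1-2j-k}^{1+j-k,1+k}$ satisfying
\begin{equation}
x_{n+1-2j-k}^{1+j-k,1+k}\,x_{n-2j-k}^{j-k,k}=\tilde{u}^n\bigl(F_{-2j-k}^{j-k,k}\bigr)=s_{n-2j-k}^{j-k,k}\bigl(F_0^{0,0}\bigr).
\end{equation}
Expanding $F_0^{0,0}$ via \eqref{dBKP polynomial} and applying the shift $s_{n-2j-k}^{j-k,k}$, the right-hand side becomes a sum of three monomial pairs in variables of the form $x_{n-2j-k}^{*,*}$ and $x_{n+1-2j-k}^{*,*}$. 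The change of variables $N=n-2j-k$, $M=j-k$, $L=k$ converts the identity into exactly \eqref{eq:dBKP} at the point $(N,M,L)$. As $(n,j,k)$ ranges over $\mathbb{Z}_{\geq 0}\times\mathbb{Z}^2$, the triple $(N,M,L)$ covers all of $\mathbb{Z}^3$ with $N+2M+3L\geq 0$; the remaining triples are handled by reading the mutations backwards, which is legitimate because $\mu_k^2(t)=t$.

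The main obstacle I anticipate is making the reuse of Proposition \ref{dBKP1} in the inductive step fully rigorous. What is actually being invoked is not only the stated output of that proposition but also the internal facts, verified there case by case, that the Laurent polynomial $\hat{F}$ coincides with $F$ throughout the iteration and that the ordering of mutations inside $\tilde{\mu}_0$ is immaterial. These properties persist after the shift because $\tilde{u}$ preserves monomial exponents and hence the denominator computations underlying the Laurent-phenomenon step, but this invariance must be stated cleanly rather than merely asserted. A secondary bookkeeping point is to check that the relabeling $x_{n-1-2j-k}^{j-k,k}\to x_{n-2j-k}^{1+j-k,1+k}$ is consistent across successive iterations, so that the new cluster variable produced at step $n$ is indeed what step $n+1$ treats as a slice-$0$ variable of $t_n$.
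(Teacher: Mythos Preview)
Your proposal is correct and follows essentially the same approach as the paper: reduce the inductive step to Proposition~\ref{dBKP1} via the shift $\tilde u$, and read off \eqref{eq:dBKP} from the mutation relation $x'_k x_k=\hat F_k$ at the $i=0$ slice. The paper's own proof is the two-line remark that \eqref{dBKP1 seed2}--\eqref{eq:dBKP} are obtained by shifting the suffixes in \eqref{dBKP1 seed1}--\eqref{eq}, so your more explicit discussion of why the internal computations ($\hat F=F$, order-independence, the dependence table \eqref{depend}) transfer under $\tilde u$, and your observation that covering all of $\mathbb{Z}^3$ requires invoking $\mu_k^2=\mathrm{id}$ to run the iteration backwards, are in fact more careful than what the paper writes.
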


\begin{proof}
Eqs. \eqref{dBKP1 seed2} and \eqref{eq:dBKP} are obtained by the shift of the suffixes in Eqs. \eqref{dBKP1 seed1} and \eqref{eq}.
Therefore, proof of Theorem \ref{dBKP2} is the same as that of Proposition \ref{dBKP1}.
Equation \eqref{eq:dBKP} is obtained by the definition of the new cluster variable
\begin{equation}
x_{1+n-2j-k}^{1+j-k,1+k}=\tilde{u}^n\left(F_{-2j-k}^{j-k,k}\right)/x_{n-2j-k}^{j-k,k}
\end{equation}
by the mutation of the seed $t_n$ at $x_{n-2j-k}^{j-k,k}$.
\qed
\end{proof}

Equation \eqref{eq:dBKP} is called the discrete BKP equation \cite{BKP}.
The discrete BKP equation is the relation among 8 point of Figure \ref{fig:dBKP}.
\begin{figure}
\begin{center}
\includegraphics[width=3cm]{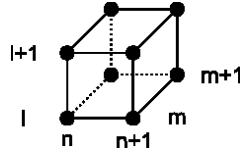}
\end{center}
\caption{The discrete BKP equation}
\label{fig:dBKP}
\end{figure}
By Proposition \ref{dBKP1}, it holds
\begin{equation}
\tilde{\mu}_0(t_0)=\left\{\left.\left(x_{1+i-2j-k}^{j-k,k},\tilde{u}\left(F_{i-2j-k}^{j-k,k}\right)\right)\right|i\in[0,5],j,k\in\mathbb{Z}\right\}.
\end{equation}
The initial seed $t_0$ satisfies the condition which is similar to the one in the Definition \ref{period}.
Therefore, the initial seed $t_0$ is regarded as a generalization of period-1 seed.
By Theorem \ref{Laurent},  all $x_n^{m,l}$ are Laurent polynomials of the initial values in $X_0$.

\subsection{Another seed of the discrete BKP equation}

Let us take
\begin{equation}\label{odBKP seed}
t_0=\left\{\left.\left(x_{i-4j-2k}^{j,k},F_{i-4j-2k}^{j,k}\right)\right|i\in[0,6],j,k\in\mathbb{Z}\right\}
\end{equation}
as an initial seed, where exchange polynomials are defined as
\begin{equation}\label{odBKP polynomial}
\begin{aligned}
F_0^{0,0}
&=x_0^{1,1}x_1^{0,0}
+x_0^{1,0}x_1^{0,1}
+x_0^{0,1}x_1^{1,0},\\
F_1^{0,0}
&=x_0^{1,0}x_1^{0,1}x_2^{0,0}
+x_0^{0,1}x_1^{1,0}x_2^{0,0}
+x_0^{0,0}x_1^{1,0}x_2^{0,1}
+x_0^{0,0}x_1^{0,1}x_2^{1,0},\\
F_2^{0,0}
&=x_1^{1,0}x_2^{0,1}x_2^{0,-1}x_3^{0,0}
+x_1^{0,1}x_2^{0,-1}x_2^{1,0}x_3^{0,0}\\
&\quad+x_1^{0,0}x_2^{0,1}x_2^{1,-1}x_3^{0,0}
+x_1^{0,0}x_2^{0,-1}x_2^{1,0}x_3^{0,1}
+x_1^{0,0}x_2^{0,1}x_2^{1,0}x_3^{0,-1},\\
F_3^{0,0}
&=x_2^{0,-1}x_3^{0,1}x_4^{0,0}
+x_2^{0,1}x_3^{0,-1}x_4^{0,0}
+x_2^{0,0}x_3^{0,-1}x_4^{0,1}
+x_2^{0,0}x_3^{0,1}x_4^{0,-1},\\
F_4^{0,0}
&=x_5^{-1,0}x_4^{0,-1}x_4^{0,1}x_3^{0,0}
+x_5^{0,-1}x_4^{0,1}x_4^{-1,0}x_3^{0,0}\\
&\quad+x_5^{0,0}x_4^{0,-1}x_4^{-1,1}x_3^{0,0}
+x_5^{0,0}x_4^{0,1}x_4^{-1,0}x_3^{0,-1}
+x_5^{0,0}x_4^{0,-1}x_4^{-1,0}x_3^{0,1},\\
F_5^{0,0}
&=x_6^{-1,0}x_5^{0,-1}x_4^{0,0}
+x_6^{0,-1}x_5^{-1,0}x_4^{0,0}
+x_6^{0,0}x_5^{-1,0}x_4^{0,-1}
+x_6^{0,0}x_5^{0,-1}x_4^{-1,0},\\
F_6^{0,0}
&=x_6^{-1,-1}x_5^{0,0}
+x_6^{-1,0}x_5^{0,-1}
+x_6^{0,-1}x_5^{-1,0},
\end{aligned}
\end{equation}
and
\begin{equation}
F_{i-4j-2k}^{j,k}=s_{-4j-2k}^{j,k}\left(F_i^{0,0}\right).
\end{equation}
Now we define $\tilde{u}=s_1^{0,0} and$ the set of cluster variables $X_i$ as
\begin{equation}
X_i:=\left\{\left.x_{i-4j-2k}^{j,k}\right|j,k\in\mathbb{Z}\right\}.
\end{equation}

\begin{thm}\label{odBKP}
We define the seed $t_n$ as the seed obtained by the iteration of the mutations of the seed $t_{n-1}$ at each and all $x\in X_{n-1}$ just once.
When we put $x_{n-1-4j-2k}^{j,k}\to x_{n-4j-2k}^{1+j,1+k}$, we find
\begin{equation}
t_n=\left\{\left.\left(x_{n+i-4j-2k}^{j,k},\tilde{u}^n\left(F_{i-4j-2k}^{j,k}\right)\right)\right|i\in[0,6],j,k\in\mathbb{Z}\right\},
\end{equation}
which does not depend on the order of mutations.
All cluster variables $x_n^{m,l}\ (n,m,l\in\mathbb{Z})$ satisfy the discrete BKP equation \eqref{eq:dBKP}.
\end{thm}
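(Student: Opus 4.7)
My plan is to adapt the proof of Proposition~\ref{dBKP1} (and its iterated form Theorem~\ref{dBKP2}) to the new seed \eqref{odBKP seed}--\eqref{odBKP polynomial}. The key preparatory observation is that a variable $x_n^{m,l}$ lies in $X_0$ precisely when $n+4m+2l=0$. Running this linear criterion over each monomial of the polynomials in \eqref{odBKP polynomial} yields a dependence table analogous to \eqref{depend}: the polynomial $F_{-4j-2k}^{j,k}$ depends on no element of $X_0$, whereas for every $i\in[1,6]$ the polynomial $F_{i-4j-2k}^{j,k}$ depends on exactly one element of $X_0$, whose shift can be read off directly from \eqref{odBKP polynomial}. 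Crucially, and in contrast with the first seed, no $F_{i-4j-2k}^{j,k}$ here involves two distinct elements of $X_0$, so the delicate branch needed at $i=3$ in Proposition~\ref{dBKP1} is absent and the order-independence of $\tilde{\mu}_0$ will come for free.

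With this table in hand, I would run a case analysis on $i\in[0,6]$ in parallel with the argument of Proposition~\ref{dBKP1}. For $i=0$ the polynomial $F_{-4j-2k}^{j,k}$ is inert under every mutation in $\tilde{\mu}_0$, and the single mutation at $x_{-4j-2k}^{j,k}$ produces the new cluster variable $F_{-4j-2k}^{j,k}/x_{-4j-2k}^{j,k}$, which, under the relabeling $x_{-4j-2k}^{j,k}\to x_{1-4j-2k}^{1+j,1+k}$, fits the $i=6$ slot of the claimed form of $t_1$ (with $(j,k)$ shifted). For each $i\in[1,6]$ the mutations in $\tilde{\mu}_0$ split into those before the unique mutation at the $X_0$-variable on which $F_{i-4j-2k}^{j,k}$ depends, that mutation itself, and those after it; the polynomial is unchanged before, is transformed into $\tilde{u}(F_{i-1-4j-2k}^{j,k})$ by a direct application of steps~4--6 of the mutation definition, and is then inert again because the resulting polynomial no longer involves any element of $X_0$. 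Because at every stage at most one mutation can act non-trivially on a given exchange polynomial, $\tilde{\mu}_0(t_0)$ is independent of the order of the mutations and has the claimed form for $n=1$.

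The general step $t_{n-1}\to t_n$ follows by induction: conjugating the whole argument by $\tilde{u}^{n-1}$ amounts to replacing each $F_{i-4j-2k}^{j,k}$ with its $\tilde{u}^{n-1}$-shift and produces $t_n$ of the stated form. To extract the discrete BKP equation, I would read off the new-variable relation
\begin{equation}
x_{n-4j-2k}^{j+1,k+1}\,x_{n-1-4j-2k}^{j,k}=\tilde{u}^{n-1}\!\left(F_{-4j-2k}^{j,k}\right)=s_{n-1-4j-2k}^{j,k}\!\left(F_0^{0,0}\right),
\end{equation}
and expand $F_0^{0,0}=x_0^{1,1}x_1^{0,0}+x_0^{1,0}x_1^{0,1}+x_0^{0,1}x_1^{1,0}$; with $(N,M,L):=(n-1-4j-2k,j,k)$ ranging freely over $\Z^3$, this is exactly \eqref{eq:dBKP}.

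The main effort lies in the case analysis: for each $i\in[1,6]$ one must check that steps~4--6 of the mutation definition, applied to $F_{i-4j-2k}^{j,k}$ together with the exchange polynomial of the unique element of $X_0$ on which it depends, produce $\tilde{u}(F_{i-1-4j-2k}^{j,k})$. These are finite polynomial identities entirely analogous to those used in Proposition~\ref{dBKP1}; each is a single direct calculation, and, because the dependence table contains no multi-dependence on $X_0$, none of them requires a commutativity argument. This is what makes the proof structurally simpler than that of Proposition~\ref{dBKP1}.
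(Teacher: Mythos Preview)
Your proposal is correct and follows exactly the approach the paper takes: its proof of Theorem~\ref{odBKP} is a one-line reference to Proposition~\ref{dBKP1} and Theorem~\ref{dBKP2}, and you have simply spelled that reference out. Your additional observation---that for this seed each $F_{i-4j-2k}^{j,k}$ with $i\in[1,6]$ depends on a \emph{single} element of $X_0$, so the two-branch commutativity analysis at $i=3$ from Proposition~\ref{dBKP1} is unnecessary---is correct and makes your sketch more informative than the paper's own proof.
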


\begin{proof}
Proof of Theorem \ref{odBKP} can be done in a similar manner to those of Proposition \ref{dBKP1} and Theorem \ref{dBKP2}.
\end{proof}

\section{Several difference equations associated with reductions of the seed}

In this section, we show that several seeds which give the difference equations obtained by imposing constraints on the BKP equation and discuss the relation to the reductions for initial seeds and polynomials in Laurent phenomenon algebra.

\subsection{2-dimensional difference equation (1)}

We take
\begin{equation}\label{2d1 seed}
t_0=\left\{\left.\left(x_{i-2j}^j,F_{i-2j}^j\right)\right|i\in[0,5],j\in\mathbb{Z}\right\}
\end{equation}
as an initial seed, where exchange polynomials are defined as
\begin{equation}\label{2d1 polynomial}
\begin{aligned}
F_{0}^{0}
&=x_1^2x_1^0
+x_0^1x_2^1
+\left(x_1^1\right)^2,\\
F_{1}^{0}
&=x_0^1x_2^1x_2^0
+\left(x_1^1\right)^2x_2^0
+x_0^0x_1^1x_3^1
+x_0^0\left(x_2^1\right)^2,\\
F_{2}^{0}
&=x_1^1x_2^{-1}x_3^1x_3^0
+\left(x_2^1\right)^2x_2^{-1}x_3^0
+x_1^0\left(x_3^0\right)^2x_2^1
+x_1^0x_2^{-1}\left(x_3^1\right)^2
+x_1^0x_3^1x_2^1x_3^{-1},\\
F_{3}^{0}
&=x_4^{-1}x_3^1x_2^{-1}x_2^0
+\left(x_3^{-1}\right)^2x_3^1x_2^0
+x_4^0\left(x_2^0\right)^2x_3^{-1}
+x_4^0x_3^1\left(x_2^{-1}\right)^2
+x_4^0x_2^{-1}x_3^{-1}x_2^1,\\
F_{4}^{0}
&=x_5^{-1}x_3^{-1}x_3^0
+\left(x_4^{-1}\right)^2x_3^0
+x_5^0x_4^{-1}x_2^{-1}
+x_5^0\left(x_3^{-1}\right)^2,\\
F_{5}^{0}
&=x_4^{-2}x_4^0
+x_5^{-1}x_3^{-1}
+\left(x_4^{-1}\right)^2,
\end{aligned}
\end{equation}
and
\begin{equation}
F_{i-2j}^j=s_{-2j}^j\left(F_i^0\right).
\end{equation}
We define  $\tilde{u}:=s_1^0$ and the set of cluster variables $X_i$ by
\begin{equation}
X_i:=\left\{\left.x_{i-2j}^j\right|j\in\mathbb{Z}\right\}.
\end{equation}

\begin{prop}\label{2d1}
We define the seed $t_n$ as the seed obtained by the iteration of the mutations of the seed $t_{n-1}$ at each and all $x\in X_{n-1}$ just once.
When we put $x_{n-1-2j}^j\to x_{n+1-2j}^{2+j}$, we have
\begin{equation}
t_n=\left\{\left.\left(x_{n+i-2j}^j,\tilde{u}^n\left(F_{i-2j}^j\right)\right)\right|i\in[0,5],j\in\mathbb{Z}\right\},
\end{equation}
which does not depend on the order of mutations.
All cluster variables $x_n^m\ (n,m\in\mathbb{Z})$ satisfy
\begin{equation}\label{eq:2d1}
x_{n+2}^{m+2}x_{n}^{m}
=x_{n+1}^{m+2}x_{n+1}^{m}
+\left(x_{n+1}^{m+1}\right)^2
+x_{n+2}^{m+1}x_{n}^{m+1}.
\end{equation}
\end{prop}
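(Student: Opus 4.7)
The plan is to mirror the proofs of Proposition \ref{dBKP1} and Theorem \ref{dBKP2}: classify the dependence of each exchange polynomial on the set $X_0=\{x_{-2j}^j\mid j\in\Z\}$, and then follow each pair $(x_{i-2j}^j,F_{i-2j}^j)$ through $\tilde{\mu}_0$. A monomial-by-monomial inspection of \eqref{2d1 polynomial} yields that $F_{-2j}^j$ depends on no $x\in X_0$, that $F_{1-2j}^j$ depends only on $x_{-2j}^j$, that $F_{i-2j}^j$ depends only on $x_{2-2j}^{j-1}$ for $i\in\{2,3,4\}$, and that $F_{5-2j}^j$ depends only on $x_{4-2j}^{j-2}$. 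The key structural feature, in contrast to the BKP case treated in Proposition \ref{dBKP1}, is that the reduction has collapsed the double-dependence of the ``$i=3$'' polynomial to a single dependence; consequently every exchange polynomial is touched by at most one mutation of $\tilde{\mu}_0$, the output of that mutation has no $X_0$-dependency, and order-independence of $\tilde{\mu}_0$ follows automatically with no two-subcase analysis needed.

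With the dependence table in hand, I split $\tilde{\mu}_0$ for each $i\in[1,5]$ into the mutations performed before, at, and after the unique $X_0$-pivot of $F_{i-2j}^j$. Before the relevant mutation the polynomial is unchanged; at the mutation I apply the Lam--Pylyavskyy $G$--$H$--$M$ construction to $F_{i-2j}^j$ and verify by direct computation that the new exchange polynomial equals $\tilde{u}(F_{i-1-2j}^j)$; after the mutation, the polynomial has no $X_0$-dependency and is therefore fixed by the remaining mutations. For $i=0$, the polynomial $F_{-2j}^j$ is inert throughout $\tilde{\mu}_0$, while the mutation at $x_{-2j}^j$ produces the new cluster variable $x_{2-2j}^{j+2}=F_{-2j}^j/x_{-2j}^j$; the relabelling identifies the resulting pair with $(x_{6-2(j+2)}^{j+2},\tilde{u}(F_{5-2(j+2)}^{j+2}))$, an identity checked by expanding $\tilde{u}\circ s_{-2(j+2)}^{j+2}$ applied to $F_5^0$ and matching termwise with $s_{-2j}^j(F_0^0)$.

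The general statement for $t_n$ then follows by applying $\tilde{u}^{n-1}$ to the argument above, exactly as Theorem \ref{dBKP2} is obtained from Proposition \ref{dBKP1}. The difference equation \eqref{eq:2d1} is read off from the mutation relation $x_{2-2j}^{j+2}x_{-2j}^j=F_{-2j}^j$: expanding $F_{-2j}^j=s_{-2j}^j(F_0^0)$ and setting $n=-2j$, $m=j$ reproduces the three terms on the right-hand side of \eqref{eq:2d1} monomial by monomial.

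The main obstacle is the explicit verification at the mutation step: for each $i\in[1,5]$, one must execute the $G$--$H$--$M$ procedure on $F_{i-2j}^j$ with pivot its unique $X_0$-dependency and confirm that the output is precisely $\tilde{u}(F_{i-1-2j}^j)$. Since $F_2^0$ and $F_3^0$ each contain five monomials with squared factors, identifying the Laurent prefactor $M$ and the cancellations forming $H$ demands careful bookkeeping. These calculations are, however, of the same type as --- and no more intricate than --- the sub-case computations carried out in the proof of Proposition \ref{dBKP1}, so no genuinely new technique is required.
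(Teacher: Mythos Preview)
Your proposal is correct and follows essentially the same route as the paper, which simply states that the proof is the same as those of Proposition \ref{dBKP1} and Theorem \ref{dBKP2}. Your dependence table is accurate, and your additional observation---that the reduction collapses the two $X_0$-dependencies of the $i=3$ polynomial in the BKP seed down to a single one, so that no two-subcase ordering analysis is needed---is a genuine simplification over the BKP argument that the paper does not make explicit.
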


2-dimensional equation \eqref{eq:2d1} is the relation among 7 points shown in Fig.~\ref{fig:2d1}.
\begin{figure}
\begin{center}
\includegraphics[width=3cm]{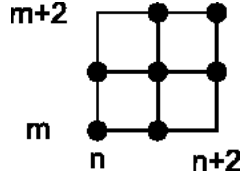}
\end{center}
\caption{2-dimensional difference equation (1)}
\label{fig:2d1}
\end{figure}
Note that 2-dimensional difference equation \eqref{eq:2d1} is obtained from the discrete BKP equation \eqref{eq:dBKP} by imposing the reduction condition
\begin{equation}\label{reduction1}
x_n^{m,l+1}=x_{n+1}^{m+1,l},\quad x_n^m:=x_n^{m,0}.
\end{equation}
The initial seed \eqref{2d1 seed} and the exchange polynomials \eqref{2d1 polynomial} are obtained from the initial seed \eqref{dBKP seed} and the exchange polynomials \eqref{dBKP polynomial} by imposing the same condition \eqref{reduction1}.
Therefore, the reduction of the seed corresponds to the reduction of the difference equation.
Proof of Proposition \ref{2d1} is the same as those of Proposition \ref{dBKP1} and Theorem \ref{dBKP2}.

\subsection{Somos-6}

We impose the reduction condition
\begin{equation}\label{reduction2}
x_n^{m+1}=x_{n+2}^m,\quad x_n:=x_n^0
\end{equation}
to the initial seed \eqref{2d1 seed} and exchange polynomials \eqref{2d1 polynomial}.
Then we obtain the seed and exchange polynomials
\begin{equation}
t_0=\{\left.(x_i,F_i)\right|i\in[0,5]\},
\end{equation}
\begin{equation}
\begin{aligned}
F_{0}
&=x_5x_1
+x_2x_4
+(x_3)^2,\\
F_{1}
&=(x_2)^2x_4
+(x_3)^2x_2
+x_0x_3x_5
+x_0(x_4)^2,\\
F_{2}
&=(x_3)^2x_0x_5
+(x_4)^2x_0x_3
+x_1(x_3)^2x_4
+x_1x_0(x_5)^2
+(x_1)^2x_5x_4,\\
F_{3}
&=(x_2)^2x_5x_0
+(x_1)^2x_5x_2
+x_4(x_2)^2x_1
+x_4x_5(x_0)^2
+(x_4)^2x_0x_1,\\
F_{4}
&=(x_3)^2x_1
+(x_2)^2x_3
+x_5x_2x_0
+x_5(x_1)^2,\\
F_{5}
&=x_0x_4
+x_3x_1
+(x_2)^2.
\end{aligned}
\end{equation}
This seed is a period-1 seed.
This seed has already been obtained in \cite{LP}.
\begin{prop}{\rm\cite{LP}}\label{Somos-6}
We define the seed $t_n$ by $t_n=\mu_{n-1}(t_{n-1})$.
When we put $x_{n-1}\to x_{n+5}$, it holds that
\begin{equation}
t_n=\left\{\left.\left(x_{n+i},u^n\left(F_i\right)\right)\right|i\in[0,5]\right\}.
\end{equation}
All cluster variables $x_n\ (n\in\mathbb{Z})$ satisfy
\begin{equation}\label{eq:Somos-6}
x_{n+6}x_{n}
=x_{n+5}x_{n+1}
+(x_{n+3})^2
+x_{n+4}x_{n+2}.
\end{equation}
\end{prop}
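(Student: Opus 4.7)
The plan is to derive Proposition \ref{Somos-6} from the period-1 machinery set up in Section 2.2, using the reduction from Proposition \ref{2d1} as a shortcut to avoid reproving the mutation computation from scratch. First I would observe that the substitution $x_n^{m+1}=x_{n+2}^m$ identifies the index set $\{(n,m)\}$ of the 2-dimensional seed with $\mathbb{Z}$ via $x_n := x_n^{m,0}$, and that under this identification the shift operator $\tilde{u}=s_1^0$ of Proposition \ref{2d1} descends to $u=u^1$. One checks directly that the reduced initial seed and exchange polynomials $(x_i,F_i)$ for $i\in[0,5]$ are exactly the images of $(x_{i-2j}^j,F_{i-2j}^j)$ under this identification; in particular $F_0 = x_5 x_1+x_2 x_4 + x_3^2$ matches the reduction of $F_0^0$.

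Next I would verify that $t_0=\{(x_i,F_i)\mid i\in[0,5]\}$ is a period-1 seed in the sense of Definition \ref{period}. The cleanest route is to note that the iteration $\tilde{\mu}_0$ of Proposition \ref{2d1}, when restricted to the quotient determined by the reduction \eqref{reduction2}, collapses to the single mutation $\mu_0$ on the 1-dimensional seed: all cluster variables in $X_0$ for the 2D seed become identified with $x_0$ under the reduction, so mutating once at $x_0$ realizes the entire 2D iteration. Consequently the conclusion of Proposition \ref{2d1}, namely $t_1=\tilde\mu_0(t_0)=\{(x_{1+i-2j}^j,\tilde{u}(F_{i-2j}^j))\}$, descends to exactly $\mu_0(t_0)=\{(x_6,u(F_5)),(x_1,u(F_0)),\ldots,(x_5,u(F_4))\}$, which is the period-1 condition.

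Having established period-1, I would then check that $\hat{F}_0=F_0$: by the first step of the mutation one needs $F_0|_{x_j\gets F_j/x_j}$ to be a Laurent polynomial not divisible by $F_j$ for $j=1,\ldots,5$, which can be read off from the explicit formulas (and which is in fact inherited from the corresponding property for $F_0^0$ in the 2D setup). With this in place, the general statement about period-1 seeds quoted at the end of Section 2.2, namely $x_{n+N}x_n=u^n(F_0)$ with $N=6$, gives
\begin{equation}
x_{n+6}x_n = u^n(x_5 x_1+x_2 x_4+x_3^2) = x_{n+5}x_{n+1}+x_{n+4}x_{n+2}+(x_{n+3})^2,
\end{equation}
which is precisely \eqref{eq:Somos-6}.

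The main obstacle is the verification of period-1, since in principle it requires tracing all six transformed exchange polynomials $F_i'$ through the mutation rule and matching them with $u(F_{i-1})$. The key insight that makes this painless is the compatibility of reduction with mutation: rather than recomputing $\mu_0$ directly on the 1D seed, one inherits the result from the already-proven 2D statement in Proposition \ref{2d1} (whose proof is in turn the same as that of Proposition \ref{dBKP1}). Once this structural reduction is articulated, the remainder is the routine extraction of $F_0$ from $u^n$, and no further case analysis on mutation orderings is needed.
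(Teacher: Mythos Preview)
The paper gives no proof of its own for Proposition~\ref{Somos-6}; it attributes the result to \cite{LP}, where the period-1 property of this rank-6 seed is verified by direct computation of $\mu_0(t_0)$. The remark preceding the proposition, that the seed and its exchange polynomials arise from \eqref{2d1 seed}--\eqref{2d1 polynomial} by imposing \eqref{reduction2}, is offered as motivation for how the seed was found, not as a proof; even for Proposition~\ref{2d1} the paper says only that the proof ``is the same as'' that of Proposition~\ref{dBKP1}, meaning a fresh direct verification rather than a formal reduction argument.

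Your proposal tries to promote the reduction observation to an actual proof, and this is where the gap lies. The step you call the ``key insight''---that LP mutation is compatible with the identification $x_n^{m+1}=x_{n+2}^m$---is asserted but not established, and it is not a general fact. The mutation rule in Definition~2.1 involves irreducibility of exchange polynomials, the divisibility test defining $\hat F_k$, and the removal of common factors in step~5; none of these are automatically preserved under a surjection of polynomial rings that identifies variables. Your specific claim that ``mutating once at $x_0$ realizes the entire 2D iteration $\tilde\mu_0$'' does not follow from the fact that all of $X_0$ collapses to $x_0$: in the 2D seed $\tilde\mu_0$ is an infinite composition of mutations at \emph{distinct} vertices, and there is no mechanism by which such a composition descends to a single mutation on a quotient (recall $\mu_k^2=\mathrm{id}$, so repeating a mutation at one vertex is not the same as mutating at many). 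To make your argument rigorous you would have to check, on the reduced seed itself, that $\hat F_0=F_0$ and that each $F_i'=u(F_{i-1})$ for $i\in[1,5]$---but that is precisely the direct verification the citation to \cite{LP} stands in for.
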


Difference equation \eqref{eq:Somos-6} is called the Somos-6 \cite{Somos}.
The Somos-6 \eqref{eq:Somos-6} is obtained from 2-dimensional difference equation \eqref{eq:2d1} by imposing the reduction condition \eqref{reduction2}.

\subsection{2-dimensional difference equation (2)}

We take
\begin{equation}\label{2d2 seed}
t_0=\left\{\left.\left(x_{i-4j}^j,F_{i-4j}^j\right)\right|i\in[0,6],j\in\mathbb{Z}\right\}
\end{equation}
as an initial seed, where exchange polynomials are defined as
\begin{equation}\label{2d2 polynomial}
\begin{aligned}
F_0^0
&=x_2^1x_1^0
+x_0^1x_3^0
+x_2^0x_1^1,\\
F_1^0
&=x_0^1x_3^0x_2^0
+\left(x_2^0\right)^2x_1^1
+x_0^0x_1^1x_4^0
+x_0^0x_3^0x_2^1,\\
F_2^0
&=x_1^1x_4^0x_0^0x_3^0
+\left(x_3^0\right)^2x_0^0x_2^1
+x_1^0x_4^0x_0^1x_3^0
+x_1^0x_0^0x_2^1x_5^0
+\left(x_1^0\right)^2x_4^0x_2^1,\\
F_3^0
&=x_0^0x_5^0x_4^0
+\left(x_4^0\right)^2x_1^0
+x_2^0x_1^0x_6^0
+\left(x_2^0\right)^2x_5^0,\\
F_4^0
&=x_5^{-1}x_2^0x_6^0x_3^0
+\left(x_3^0\right)^2x_6^0x_4^{-1}
+x_5^0x_2^0x_6^{-1}x_3^0
+x_5^0x_6^0x_4^{-1}x_1^0
+\left(x_5^0\right)^2x_2^0x_4^{-1},\\
F_5^0
&=x_6^{-1}x_3^0x_4^0
+\left(x_4^0\right)^2x_5^{-1}
+x_6^0x_5^{-1}x_2^0
+x_6^0x_3^0x_4^{-1},\\
F_6^0
&=x_4^{-1}x_5^0
+x_6^{-1}x_3^0
+x_4^0x_5^{-1},
\end{aligned}
\end{equation}
and
\begin{equation}
F_{i-4j}^j=s_{-4j}^j\left(F_i^0\right).
\end{equation}
We define $\tilde{u}=s_1^0$ and the set of cluster variables $X_i$ as
\begin{equation}
X_i:=\left\{\left.x_{i-4j}^j\right|j\in\mathbb{Z}\right\}.
\end{equation}

\begin{prop}\label{2d2}
We define the seed $t_n$ as the seed obtained by the iteration of the mutations of the seed $t_{n-1}$ at each and all $x\in X_{n-1}$ just once.
Putting $x_{n-1-4j}^j\to x_{n+2-4j}^{j+1}$, we find
\begin{equation}
t_n=\left\{\left.\left(x_{n+i-4j}^j,\tilde{u}^n\left(F_{i-4j}^j\right)\right)\right|i\in[0,6],j\in\mathbb{Z}\right\},
\end{equation}
which does not depend on the order of mutations.
All cluster variables $x_n^m\ (n,m\in\mathbb{Z})$ satisfy
\begin{equation}\label{eq:2d2}
x_{n+3}^{m+1}x_n^m
=x_{n+2}^{m+1}x_{n+1}^m
+x_{n+1}^{m+1}x_{n+2}^m
+x_{n+3}^mx_n^{m+1}.
\end{equation}
\end{prop}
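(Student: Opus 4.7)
The plan is to observe first that the initial seed \eqref{2d2 seed} and exchange polynomials \eqref{2d2 polynomial} arise from the seed \eqref{odBKP seed}--\eqref{odBKP polynomial} of Theorem \ref{odBKP} by imposing the reduction
\begin{equation*}
x_n^{m,l+1}=x_{n+2}^{m,l},\qquad x_n^m:=x_n^{m,0}.
\end{equation*}
Under this identification the label $x_{i-4j-2k}^{j,k}$ collapses to $x_{i-4j}^{j}$, the polynomials in \eqref{odBKP polynomial} match termwise with \eqref{2d2 polynomial}, and the discrete BKP equation \eqref{eq:dBKP} becomes exactly \eqref{eq:2d2}, mirroring the reduction remark made in the 2d1 subsection. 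Conceptually the proposition therefore reduces to Theorem \ref{odBKP} in the quotient, modulo checking that the reduction commutes with the mutation recipe.

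In the spirit of the direct proofs of Proposition \ref{dBKP1} and Theorem \ref{dBKP2}, however, I would execute the proof by case analysis on $i\in[0,6]$. First I would catalogue the dependence of each exchange polynomial on $X_0=\{x_{-4j}^{j}\mid j\in\Z\}$. A direct inspection of \eqref{2d2 polynomial} shows that $F_{-4j}^{j}$ is $X_0$-independent, that each of $F_{1-4j}^{j}$, $F_{2-4j}^{j}$, $F_{3-4j}^{j}$ depends on the single variable $x_{-4j}^{j}\in X_0$, and that each of $F_{4-4j}^{j}$, $F_{5-4j}^{j}$, $F_{6-4j}^{j}$ depends on the single variable $x_{-4(j-1)}^{j-1}\in X_0$. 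Every polynomial therefore depends on at most one element of $X_0$, so every case falls into the ``single-dependence'' template treated as $i=1$ in the proof of Proposition \ref{dBKP1}; the delicate two-order bookkeeping required for the $i=3$ case there does not arise.

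For each $i\in[0,6]$ I would then track the pair $(x_{i-4j}^{j},F_{i-4j}^{j})$ through the composite $\tilde{\mu}_0$ by splitting the iteration into three stages: mutations preceding the unique active mutation (which leave the polynomial unchanged because it depends on no other element of $X_0$), the active mutation itself (at the single $X_0$-variable on which $F_{i-4j}^{j}$ depends), and the subsequent mutations (which leave the now $X_0$-independent updated polynomial unchanged). For $i=0$ there is no active mutation for the polynomial; only the cluster variable is renamed via $x_{-4j}^{j}\to x_{3-4j}^{j+1}$.

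The main obstacle is the active step: for each $i\in\{1,\dots,6\}$ one must verify that the updated polynomial---obtained via $G_i$, removal of common factors to form $H_i$, and multiplication by the normalizing monomial $M_i$---is exactly $\tilde{u}(F_{i-1-4j}^{j})$. This is mechanical but attention-demanding, especially because several of the polynomials in \eqref{2d2 polynomial} contain squared factors that must be tracked in the normalization. Once this has been confirmed case by case, the periodicity statement of the seed is immediate, and the defining relation $x_{s+3-4j}^{j+1}x_{s-4j}^{j}=\tilde{u}^{s}\bigl(F_{-4j}^{j}\bigr)$ arising from the mutation of $t_{s}$ at $x_{s-4j}^{j}$ yields, after the substitution $(s-4j,j)\mapsto(n,m)$, equation \eqref{eq:2d2} for all $(n,m)\in\Z^2$.
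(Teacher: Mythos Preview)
Your proposal is correct and follows exactly the paper's (implicit) approach: the paper gives no formal proof of this proposition, merely noting the reduction from the seed \eqref{odBKP seed}--\eqref{odBKP polynomial} and relying on the template of Proposition~\ref{dBKP1}/Theorem~\ref{dBKP2} (as stated explicitly for the parallel Proposition~\ref{2d1}). Your dependency catalogue is accurate, and your observation that every $F_{i-4j}^{j}$ depends on at most one element of $X_0$---so that the two-order case analysis from $i=3$ of Proposition~\ref{dBKP1} is unnecessary here---is a genuine simplification that the paper does not make explicit.
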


2-dimensional equation \eqref{eq:2d2} is the relation among 8 points shown in Fig.~\ref{fig:2d2}.
\begin{figure}
\begin{center}
\includegraphics[width=3cm]{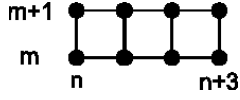}
\end{center}
\caption{2-dimensional difference equation (2)}
\label{fig:2d2}
\end{figure}

Note that 2-dimensional difference equation \eqref{eq:2d2} is obtained from the discrete BKP equation \eqref{eq:dBKP} by imposing the reduction condition
\begin{equation}\label{reduction3}
x_n^{m,l+1}=x_{n+2}^{m,l},\quad x_n^m:=x_n^{m,0}.
\end{equation}
The initial seed \eqref{2d2 seed} and the exchange polynomials \eqref{2d2 polynomial} are obtained from the initial seed \eqref{odBKP seed} and the exchange polynomials \eqref{odBKP polynomial} by imposing the same condition \eqref{reduction3}.

\subsection{Somos-7}

We take
\begin{equation}\label{Somos-7 seed}
t_0=\{\left.(x_i,F_i)\right|i\in[0,6]\}
\end{equation}
as an initial seed, where exchange polynomials are defined as
\begin{equation}\label{Somos-7 polynomial}
\begin{aligned}
F_0
&=x_6x_1
+x_4x_3
+x_2x_5\\
F_1
&=x_4x_3x_2
+(x_2)^2x_5
+x_0x_5x_4
+x_0x_3x_6\\
F_2
&=x_5x_4x_0x_3
+(x_3)^2x_0x_6
+x_1(x_4)^2x_3
+x_1x_0x_6x_5
+(x_1)^2x_4x_6\\
F_3
&=x_0x_5x_4
+(x_4)^2x_1
+x_2x_1x_6
+(x_2)^2x_5\\
F_4
&=x_1x_2x_6x_3
+(x_3)^2x_6x_0
+x_5(x_2)^2x_3
+x_5x_6x_0x_1
+(x_5)^2x_2x_0\\
F_5
&=x_2x_3x_4
+(x_4)^2x_1
+x_6x_1x_2
+x_6x_3x_0\\
F_6
&=x_0x_5
+x_2x_3
+x_4x_1.
\end{aligned}
\end{equation}

\begin{prop}\label{Somos-7}
We define the seed $t_n$ by $t_n=\mu_{n-1}(t_{n-1})$.
We put $x_{n-1}\to x_{n+6}$ and we have
\begin{equation}
t_n=\left\{\left.\left(x_{n+i},u^n\left(F_i\right)\right)\right|i\in[0,6]\right\}.
\end{equation}
All cluster variables $x_n\ (n\in\mathbb{Z})$ satisfy
\begin{equation}\label{eq:Somos-7}
x_{n+7}x_n
=x_{n+6}x_{n+1}
+x_{n+5}x_{n+2}
+x_{n+3}x_{n+4}.
\end{equation}
\end{prop}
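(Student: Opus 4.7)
The plan is to verify the period-1 property of $t_0$ directly and then apply the discussion of period-1 seeds in Section 2.2. The existence of this seed is motivated by a reduction: imposing $x_n^{m+1} = x_{n+4}^m$ (with $x_n := x_n^0$) on the 2-dimensional seed \eqref{2d2 seed}--\eqref{2d2 polynomial} collapses $x_{i-4j}^j$ to $x_i$ and $F_{i-4j}^j$ to $F_i$, and turns equation \eqref{eq:2d2} with $m = 0$ into the Somos-7 equation \eqref{eq:Somos-7}, since $x_{n+3}^1 = x_{n+7}$, $x_{n+2}^1 = x_{n+6}$, and so on. In particular, the seed \eqref{Somos-7 seed}--\eqref{Somos-7 polynomial} is exactly what one obtains from \eqref{2d2 seed}--\eqref{2d2 polynomial} under this identification.

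First I would check that $\hat F_i = F_i$ for each $i \in [0,6]$: the substitution $x_j \gets F_j/x_j$ into $F_i$ must yield a Laurent polynomial not divisible by $F_j$ for every $j \neq i$. This is a finite case analysis on the explicit polynomials in \eqref{Somos-7 polynomial}. Next I would compute $\mu_0(t_0)$ using the LP mutation rule. The new cluster variable is $x_0' = F_0/x_0$, which I relabel $x_7$. For each $i \in [1,6]$ I would form $G_i = F_i|_{x_0 \gets (\hat F_0|_{x_i \gets 0})/x_7}$, strip common factors to obtain $H_i$, and multiply by the unique monomial $M_i$ such that $F_i' := M_i H_i$ is a polynomial in $x_1,\dots,x_6,x_7$, not divisible by any variable and independent of $x_i$. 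The goal is to show $F_i' = u(F_{i-1})$ for $i = 1,\dots,6$, with $F_0' = u(F_6)$ paired with $x_7$ in the position vacated by $x_0$. Once this period-1 identity is established, the discussion in Section 2.2 yields by induction the stated form of $t_n$, and the exchange relation $x_{n+7} x_n = u^n(F_0)$ unpacks to \eqref{eq:Somos-7}.

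The main obstacle is the explicit verification of $F_i' = u(F_{i-1})$: the polynomials $F_2$, $F_3$, $F_4$, $F_5$ have five monomials each, and the intermediate expressions demand careful factor cancellation. A conceptual shortcut is that each step of the LP mutation rule is compatible with the reduction $x_n^{m+1} = x_{n+4}^m$: the shift $s_{-4j}^j$, the substitution $x_a^b \mapsto x_{a+4b}$, and the removal of common factors all commute with this identification on the polynomials in question. The six identities for Somos-7 therefore descend from the corresponding identities already established in the proof of Proposition \ref{2d2}, so the argument is structurally the same as that of Propositions \ref{dBKP1} and \ref{2d2}.
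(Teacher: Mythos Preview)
Your plan matches the paper's treatment: the paper gives no separate proof of Proposition~\ref{Somos-7} beyond the remark that the seed \eqref{Somos-7 seed}--\eqref{Somos-7 polynomial} arises from \eqref{2d2 seed}--\eqref{2d2 polynomial} under the reduction \eqref{reduction4}, leaving the period-1 verification to the same direct computation as in Proposition~\ref{dBKP1}. Your first paragraph and the explicit computation you outline are exactly that.

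One caution on your ``conceptual shortcut'': the paper only asserts that the \emph{seed} descends under the reduction, not that the LP mutation itself commutes with the identification $x_n^{m+1}=x_{n+4}^m$. In general this compatibility is not automatic---after identifying variables an exchange polynomial could become reducible, or the factor-removal step~(5) in the mutation could behave differently because new common factors appear. So the shortcut is heuristic motivation rather than a proof; the honest argument is the direct period-1 check you describe first, which is also what the paper implicitly relies on.
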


Difference equation \eqref{eq:Somos-7} is called the Somos-7 \cite{Somos}.
Note that the Somos-7 \eqref{eq:Somos-7} is obtained from the 2-dimensional difference equation \eqref{eq:2d2} by imposing the reduction condition
\begin{equation}\label{reduction4}
x_n^{m+1}=x_{n+4}^m,\quad x_n:=x_n^0.
\end{equation}
The initial seed \eqref{Somos-7 seed} and the exchange polynomials \eqref{Somos-7 polynomial} are obtained from the initial seed \eqref{2d2 seed} and the exchange polynomials \eqref{2d2 polynomial} by imposing the same condition \eqref{reduction4}.

\section{Conclusion}
We have shown that cluster variables can satisfy the discrete BKP equation, the 2-dimensional difference equations of its reductions, and Somos-7, if we take appropriate initial seeds in Laurent phenomenon algebras.
These initial seeds are obtained from reductions of the seed of the discrete BKP equation.
It is known that cluster variables of suitable cluster algebras can satisfy the bilinear form of some $q$-discrete Painlev\'e equations \cite{dPanleve}, when the initial seed includes appropriate periodic quivers \cite{quiver2}.
However, we have not obtained the $q$-discrete Painlev\'e equations of type $A_4^{(1)}$ and $E_n^{(1)}$ from cluster algebras.
To clarify the relation between these equations and Laurent phenomenon algebras is one of the problems we wish to address in the future.

\end{document}